\theoremstyle{plain}
\newtheorem{thm}{Theorem}
\newtheorem{cor}[thm]{Corollary}
\theoremstyle{definition}
\newtheorem{defn}{Definition}
\theoremstyle{plain}
\theoremstyle{problem}
\theoremstyle{plain}
\theoremstyle{plain}
\theoremstyle{plain}
\begin{document}

\thispagestyle{empty}

\begin{center}
{\huge\bf Improper Graceful and Odd-graceful Labellings of Graph Theory}\\[8pt]
\end{center}
\begin{center}
{\large Hongyu Wang$^{a,b}$\quad Jin Xu$^{a}$\quad Bing Yao$^{b,}$\footnote{Corresponding author, Email: yybb918@163.com}}\\[8pt]
{\footnotesize a. School of Electronics Engineering and Computer Science, Peking University, 100871, China \\
b. College of Mathematics and Statistics, Northwest
Normal University, Lanzhou, 730070, China}
\end{center}

\begin{abstract}
In this paper we define some new labellings for trees, called the  in-improper and out-improper odd-graceful labellings such that some trees labelled with the new labellings can induce graceful graphs having at least a cycle. We, next, apply the new labellings to construct large scale of graphs having improper graceful/odd-graceful labellings or having graceful/odd-graceful labellings.\\[6pt]
\textbf{AMS Subject Classification (2000):} 05C78\\[6pt]
\textbf{Keywords:} graceful labelling; odd-graceful labelling; trees.
\end{abstract}

\section{Introduction and concepts}

Graph colorings/labellings of graph theory can be applied in many areas of science such as bioinformatics, (scale-free, small-world) networks, VLSI. As known, colored/labelled graphs are used in many areas of today's science and computer science. For example, some design algorithms are related with such graphs; the problem of frequency distribution of communication can be investigated by graph distinguishing colorings. We try to explore connection between labelled graphs and cryptographs of information security for graphic cryptograph. The survey article \cite{Gallian2012} in which more than 1400 papers on various graph labellings were collected provides a good overview of concepts and results of current graph labellings. It has been known that all lobsters are odd-graceful (Ref. \cite{Zhou-Yao-Chen-Tao2012}), however ``\emph{all lobsters are graceful}'' conjectured by Bermond \cite{J-C-Bermond1979} is open up now. In \cite{Van-Bussel-2002}, the author investigated the so-called \emph{relaxed graceful labelling} of trees. We are motivated from the relaxed graceful labelling, and then propose some improper graceful labellings for trees in order to producing graceful graphs having at least a cycle.

We use standard terminology and notation of graph theory,  and all graphs mentioned here are simple, undirected and finite. For the sake of simplicity, the shorthand symbol ${[m, n]}$ stands for a set $\{m, m+1, \dots, n\}$,
where $m$ and $n$ are integers with $0\leq m< n$; the notation ${[s, t]^{o}}$ indicates an odd-set $\{s, s+2, \dots, t\}$, where $s$ and $t$ both are odd integers with $1\leq s< t$. A \emph{leaf} is a vertex of degree one, and $L(G)$ stands for the set of leaves of a graph $G$. A \emph{caterpillar} is a tree $T$ such that the deletion of leaves of $T$ results in a path. A \emph{lobster} is a tree  such that the graph obtained by removing all leaves of the tree is just a
caterpillar. A $(p,q)$-graph is a graph having $p$ vertices and $q$ edges.

\begin{defn} \label{defn:proper-bipartite-labelling-ongraphs}
Suppose that a connected $(p,q)$-graph $G$ admits a mapping
$\theta:V(G)\rightarrow \{0,1,2,\dots \}$. For every edge $xy\in E(G)$
the induced edge label is defined as
$\theta(xy)=|\theta(x)-\theta(y)|$, and write
$\theta(V(G))=\{\theta(u):u\in V(G)\}$,
$\theta(E(G))=\{\theta(xy):xy\in E(G)\}$. There are the following
constraints:
\begin{asparaenum}[(a)]
\item \label{Proper01} $|\theta(V(G))|=p$.
\item \label{Graceful-001} $\theta(V(G))\subseteq [0,q]$, $\min \theta(V(G))=0$.
\item \label{Odd-graceful-001} $\theta(V(G))\subset [0,2q-1]$, $\min \theta(V(G))=0$.
\item \label{Graceful-002} $\theta(E(G))=[1,q]$.
\item \label{Odd-graceful-002} $\theta(E(G))=[1,2q-1]^o$.
\item \label{Set-ordered} $G$ is a bipartite graph with the bipartition
$(X,Y)$ such that $\max\{\theta(x):x\in X\}< \min\{\theta(y):y\in
Y\}$ ($\theta(X)<\theta(Y)$ for short).
\item \label{Graceful-matching} $G$ is a tree containing a perfect matching $M$ such that
$\theta(x)+\theta(y)=q$ for each edge $xy\in M$.
\item \label{Odd-graceful-matching} $G$ is a tree having a perfect matching $M$ such that
$\theta(x)+\theta(y)=2q-1$ for each edge $xy\in M$.
\end{asparaenum}

Then $\theta$ is a \emph{graceful labelling} if it holds (\ref{Proper01}), (\ref{Graceful-001}) and (\ref{Graceful-002});  $\theta$ is a \emph{set-ordered graceful labelling} if it holds (\ref{Proper01}),
(\ref{Graceful-001}), (\ref{Graceful-002}) and (\ref{Set-ordered});  $\theta$ is 
a \emph{strongly graceful labelling} if it holds (\ref{Proper01}),
(\ref{Graceful-001}), (\ref{Graceful-002}) and
(\ref{Graceful-matching});  $\theta$ is a \emph{strongly set-ordered graceful
labelling} if it holds (\ref{Proper01}), (\ref{Graceful-001}),
(\ref{Graceful-002}), (\ref{Set-ordered}) and
(\ref{Graceful-matching});  $\theta$ is an \emph{odd-graceful labelling} if it holds (\ref{Proper01}),
(\ref{Odd-graceful-001}) and (\ref{Odd-graceful-002});  $\theta$ is a
\emph{set-ordered odd-graceful labelling} if it holds
(\ref{Proper01}), (\ref{Odd-graceful-001}), (\ref{Odd-graceful-002})
and (\ref{Set-ordered});  $\theta$ is a \emph{strongly odd-graceful labelling}
if it holds (\ref{Proper01}), (\ref{Odd-graceful-001}),
(\ref{Odd-graceful-002}) and (\ref{Odd-graceful-matching});  $\theta$ is a
\emph{strongly set-ordered odd-graceful labelling} if it holds
(\ref{Proper01}), (\ref{Odd-graceful-001}),
(\ref{Odd-graceful-002}), (\ref{Set-ordered}) and
(\ref{Odd-graceful-matching}).
\end{defn}

Also, graceful/odd-graceful labellings are called \emph{proper graceful/odd-graceful labellings} in the following discussion.

\begin{defn} \label{defn:2-improper-graceful-labelling}
Let $G$ be a $(p,q)$-graph, and $f$ be a mapping from $V(G)$ to $\{0,1,2,\dots \}$ such that $\min f(V(G))=0$ and $\max f(V(G))=m$, where $f(V(G))=\{f(x):x\in V(G)\}$. Write $f(E(G))=\{f(uv)=|f(u)-f(v)|:uv\in E(G)\}$. There are four restrictions:

C1. $f(E(G))=[1,q]$ and $m\leq q$; C2. $f(E(G))=[1,q]$ and $m>q$;

C3. $f(E(G))=[1,2q-1]^o$ and $m\leq 2q-1$; C4. $f(E(G))=[1,2q-1]^o$ and $m>2q-1$.\\
Then, $f$ is called an \emph{in-improper graceful labelling} (in-imgl) if $f$ holds C1, an \emph{out-improper graceful labelling} (out-imgl) if $f$ holds C2, an \emph{in-improper odd-graceful labelling} (in-imoddgl) if $f$ holds C3, and an \emph{out-improper odd-graceful labelling} (out-imoddgl) if $f$ holds C4.
\end{defn}

The result in Theorem \ref{lemma:basic-theorem-000} have been shown in some literature (Ref. \cite{Gallian2012}). We show a short proof of Theorem \ref{lemma:basic-theorem-000} here.

\begin{thm} \label{lemma:basic-theorem-000}
Every tree admits $(i)$ in-imgls or out-imgls; and $(ii)$ in-imoddgls or out-imoddgls.
\end{thm}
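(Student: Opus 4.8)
The plan is to prove both statements by induction on the number of edges $q$, peeling off a single leaf at each step. For part $(i)$ I will establish the slightly stronger uniform claim that every tree $T$ with $q$ edges carries a (not necessarily injective) vertex labelling $f:V(T)\rightarrow\{0,1,2,\dots\}$ with $\min f(V(T))=0$ whose induced edge labels form exactly $[1,q]$. Once such an $f$ is in hand, it automatically satisfies either C1 or C2 of Definition \ref{defn:2-improper-graceful-labelling} according as $\max f(V(T))\le q$ or $\max f(V(T))>q$, so $T$ admits an in-imgl or an out-imgl; this is why the theorem only needs the disjunction ``in-imgls \emph{or} out-imgls''. Part $(ii)$ will run in parallel, with the target edge-label set $[1,2q-1]^o$ replacing $[1,q]$ throughout.

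For the base case $q=1$ the tree is a single edge $xy$; setting $f(x)=0$ and $f(y)=1$ yields the edge label $1$, and $[1,1]=[1,1]^o=\{1\}$, so both versions hold. For the inductive step I take a tree $T$ with $q\ge 2$ edges, choose any leaf $u$ with neighbour $v$, and apply the induction hypothesis to the tree $T-u$, which has $q-1$ edges, obtaining a labelling $f'$ with $\min f'(V(T-u))=0$ and edge-label set $[1,q-1]$. I extend $f'$ to $T$ by retaining all old labels and setting $f(u)=f'(v)+q$. The restored edge $uv$ then receives label $|f(u)-f'(v)|=q$, a value absent from $[1,q-1]$, so the edge-label set becomes $[1,q-1]\cup\{q\}=[1,q]$, while $\min f(V(T))=0$ is preserved because $f(u)=f'(v)+q\ge q>0$. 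The odd-graceful version is identical except that I set $f(u)=f'(v)+(2q-1)$, which produces the new edge label $2q-1$ and enlarges $[1,2q-3]^o$ to $[1,2q-1]^o$.

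The reason this construction never meets an obstruction is precisely that Definition \ref{defn:2-improper-graceful-labelling} discards the injectivity requirement \ref{Proper01} imposed in Definition \ref{defn:proper-bipartite-labelling-ongraphs}: the freedom to repeat vertex labels means the offset $f'(v)+q$ is always an admissible nonnegative value, and I never have to check that it avoids previously assigned vertex labels. Consequently the only points to verify are that the fresh edge label equals the single missing value $q$ (resp. $2q-1$) and that the global minimum stays at $0$, both immediate. I do not expect a genuine ``hard part''; the one place deserving an explicit sentence is the dichotomy itself, namely that every labelling produced above lands in exactly one of C1/C2 (resp. C3/C4) depending on the size of its maximum, so no further case analysis is needed to decide which of the two improper types actually occurs.
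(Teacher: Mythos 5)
Your proof is correct and takes essentially the same approach as the paper's: induction on the size of the tree, deleting a leaf $u$ with neighbour $v$, and labelling it $f'(v)+q$ (resp.\ $f'(v)+(2q-1)$) so that the restored edge receives the single missing label $q$ (resp.\ $2q-1$). If anything, your formulation is slightly cleaner: the paper flatly asserts the extended labelling is an out-imgl (which need not hold when $f'(v)=0$ and the old maximum is small), whereas you state the induction as a uniform claim about the edge-label set and correctly relegate the C1/C2 (resp.\ C3/C4) dichotomy to a final comparison of the maximum vertex label with $q$ (resp.\ $2q-1$).
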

\begin{proof} We present the proof of the assertion $(i)$ by introduction on orders of trees, since the proof of the assertion $(ii)$ is very similar with that of the assertion $(i)$.

Let $T$ be a tree on $n$ vertices, so $T$ has $(n-1)$ edges. For $n=2$, $T$ admits a graceful labelling. Suppose that the theorem  holds true for trees having smaller numbers of vertices. We consider a tree $H=T-v$, where $v$ is a leaf of $T$ and is adjacent to $u$ in $T$. So $H$ has an in-imgl or an out-imgl $f$ such that $f(E(H))=[1,n-2]$ by induction hypothesis. We label the vertex $v$ of $T$ as $n-1+f(u)$. Thereby, we obtain an out-imgl $g$ as: $g(v)=n-1+f(u)$, $g(x)=f(x)$ for $x\in V(H)=V(T)\setminus \{v\}$. Clearly, $g(E(T))=[1,n-1]$. The proof of the theorem  is complete.
\end{proof}

Based on Theorem \ref{lemma:basic-theorem-000} there are two parameters under a labelling $f$ of a tree $T$ on $n$ vertices: (1) $k=|S|$, where $S$ is a maximum subset of $V(T)$ such that every vertex $u\in S$ holds $f(u)=f(u\,')$ for some $u\,'\in S\setminus\{u\}$; (2) $l=n-c$, where $c$ is the number of distinct labels assigned to the vertices of $T$. We call $f$ an \emph{in-$(k,l)$-imgl} (resp. \emph{out-$(k,l)$-imgl}) as if $f$ is an in-imgl (resp. out-imgl) of $T$, and furthermore we call $f$ a \emph{proper graceful labelling} if $(k,l)=(0,0)$; and we call $f$ an \emph{in-$(k,l)$-imoddgl}  (resp. \emph{out-$(k,l)$-imoddgl}) if  $f$ is an in-imoddgl (resp. out-imoddgl) of $T$, and furthermore we call $f$ a \emph{proper odd-graceful labelling} when $(k,l)=(0,0)$. There exist the following possible parameters:

$l^{\max}_{in}=\max\{l:~\textrm{in-$(k,l)$-imgls}\}$ and $k^{\max}_{in}=\max \{k:~\textrm{in-$(k,l^{\max}_{in})$-imgls}\}$;

 $l^{\min}_{in}=\max\{l:~\textrm{in-$(k,l)$-imgls}\}$ and $k^{\min}_{in}=\max \{k:~\textrm{in-$(k,l^{\min}_{in})$-imgls}\}$;

$l^{\min}_{out}=\min\{l:~\textrm{out-$(k,l)$-imgls}\}$ and  $k^{\min}_{out}=\min\{l:~\textrm{out-$(k,l^{\min}_{out})$-imgls}\}$.

Similarly, it may be interesting to consider: \emph{Determine in-$(k^{\max}_{in},l^{\max}_{in})$-imoddgls and out-$(k^{\min}_{in}, l^{\min}_{in})$-imoddgls of trees}, where $l^{\max}_{in}=\max\{l:~\textrm{in-$(k,l)$-imoddgls}\}$, $k^{\max}_{in}=\max \{k:~\textrm{in-$(k,l^{\max}_{in})$-imoddgls}\}$, $l^{\min}_{in}=\min\{l:~\textrm{in-$(k,l)$-imoddgls}\}$ and $k^{\min}_{in}=\min \{k:~\textrm{in-$(k,l^{\min}_{in})$-imoddgls}\}$.

\begin{figure}[h]
\centering
\includegraphics[height=5cm]{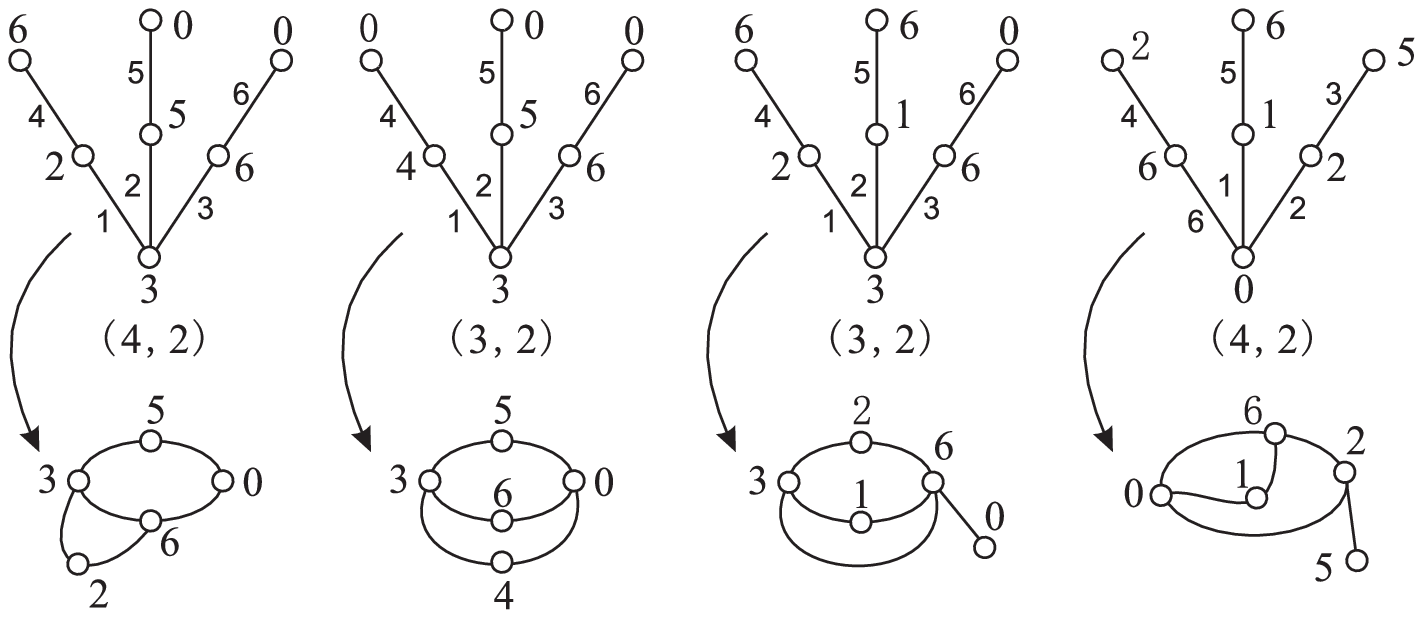}
\caption{\label{fig:definition-111}  {\small  A lobster having four in-improper graceful labellings induces four graceful graphs.}}
\end{figure}

Rosa \cite{Rosa1966} in 1966 conjectured: \emph{Every tree is graceful}. Thereby, Rosa's conjecture implies: Every tree admits in-improper graceful labellings, and furthermore every in-$(k^{\min}_{in},l^{\min}_{in})$-imgl is equal to a graceful labelling in a tree. In 1991 Gnanajothi \cite{R-B-Gnanajothi1991} gave a conjecture: \emph{Every tree is odd-graceful}, which leads that every tree admits in-improper odd-graceful labellings; and furthermore each in-$(k^{\min}_{in},l^{\min}_{in})$-imoddgl is an odd-graceful labelling in a tree.

\section{Finding in-imgls and in-imoddgls of graphs}

The term ``adding leaves to a $(p,q)$-graph $G$'' means that we join new vertices $u_{i,1},u_{i,2},\dots ,u_{i,m_i}$ to each vertex $u_i$ of the $(p,q)$-graph $G$ for $i\in [1,p]$, here, some $m_i$ may be zero.

\begin{thm} \label{them:basic-theorem}
Suppose that a tree $T$ admit set-ordered graceful/odd-graceful labellings. Then adding leaves to $T$ produces a tree that admits in-imgls and proper odd-graceful labellings.
\end{thm}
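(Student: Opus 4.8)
The plan is to build the two required labellings of the augmented tree $T^{*}$ directly from the given set-ordered labellings of $T$, by first ``making room'' for the new leaf-edges and then attaching the leaves so that they carry exactly the smallest edge labels. Throughout I write $X,Y$ for the bipartition of $T$ with the set-ordered condition ($\theta(X)<\theta(Y)$), put $s=|X|$ and $q=|E(T)|$, and let $M$ be the total number of added leaves, so that $T^{*}$ has $q^{*}=q+M$ edges. For a set-ordered graceful labelling I may take $\theta(X)=[0,s-1]$ and $\theta(Y)=[s,q]$ (forced, since $\theta(V(T))=[0,q]$ and the set-ordered condition puts the small labels on $X$); for a set-ordered odd-graceful labelling I may normalise to the standard form $\psi(X)=\{0,2,\dots,2s-2\}$ and $\psi(Y)=[2s-1,2q-1]^{o}$, the parities being determined because every edge label is odd and $T$ is a connected bipartite graph.

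First I would construct the in-imgl $f$. Keep the labels of $X$ fixed and raise every label of $Y$ by $M$, i.e. set $f(x)=\theta(x)$ and $f(y)=\theta(y)+M$. Then each original edge $xy$ gains $M$ on its label, so the original edges realise exactly $[M+1,q^{*}]$, and the largest vertex label becomes $q+M=q^{*}$. It now suffices to make the $M$ leaves realise the remaining labels $[1,M]$: fix any bijection between the leaves and $[1,M]$, and to a leaf carrying target edge label $d$ attached to $u$ assign $f(u)+d$ when $u\in X$ and $f(u)-d$ when $u\in Y$. A one-line range check shows every such leaf label lies in $[0,q^{*}]$, whence $f(E(T^{*}))=[1,q^{*}]$ and $\max f(V(T^{*}))=q^{*}$; so $f$ is an in-imgl. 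Note $f$ need not be injective (a leaf may repeat an existing label), which is exactly why only the improper conclusion is available in the graceful case.

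The substantial part is the proper odd-graceful labelling $\phi$, where injectivity must be recovered. I would proceed analogously but raise each $Y$-label by $2M$: set $\phi(x)=\psi(x)$, $\phi(y)=\psi(y)+2M$, so the original edges occupy $[2M+1,2q^{*}-1]^{o}$ and the leaves must supply the odd labels $[1,2M-1]^{o}$. The key structural observation is parity separation: a leaf on an $X$-vertex (even label) receives an odd label, while a leaf on a $Y$-vertex (odd label, after the shift) receives an even label, and a short range computation shows the leaf-on-$X$ labels stay below $\min\phi(Y)$ and the leaf-on-$Y$ labels stay above $\max\phi(X)$. Hence the four families --- $\phi(X)$ (even, small), leaves on $X$ (odd, small), leaves on $Y$ (even, large), $\phi(Y)$ (odd, large) --- are pairwise separated either by parity or by range, and the only remaining danger is a collision within the leaf-on-$X$ family or within the leaf-on-$Y$ family.

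This last point is the main obstacle, and I would resolve it by a monotone assignment of edge labels. Split $[1,2M-1]^{o}$ into a top block for the $X$-leaves and a bottom block for the $Y$-leaves. Within the $X$-leaves, give the vertex $x_i$ with the larger label $\psi(x_i)=2(i-1)$ the larger block of edge labels; since a leaf label is $2(i-1)+d$, the vertex contribution and the edge label then increase together, so the resulting leaf labels are strictly increasing across and within vertices, hence distinct. Symmetrically, within the $Y$-leaves give the vertex with the larger label $\phi(y)$ the smaller block of edge labels, so that $\phi(y)-d$ is strictly monotone and again injective. Combined with the parity/range separation above, this shows $\phi$ is injective with image in $[0,2q^{*}-1]$, minimum $0$, and $\phi(E(T^{*}))=[1,2q^{*}-1]^{o}$; thus $\phi$ is a proper odd-graceful labelling of $T^{*}$, completing the proof. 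The contrast with the graceful case is instructive: there the vertex labels $[0,q^{*}]$ number exactly $|V(T^{*})|$, leaving no slack to avoid collisions, whereas the odd-graceful range $[0,2q^{*}-1]$ together with the even/odd split supplies precisely the room that upgrades the conclusion from improper to proper.
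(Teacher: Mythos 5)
Your proof is correct and follows essentially the same route as the paper's: keep the $X$-labels, shift the $Y$-labels up by the number of new edges (twice that in the odd case), give the new leaf edges the small labels with a monotone block assignment (larger $X$-vertex gets larger edge labels, larger $Y$-vertex gets smaller ones), set leaf labels as vertex$\,+\,$edge on the $X$-side and vertex$\,-\,$edge on the $Y$-side, and use parity plus range separation to get injectivity, hence a proper odd-graceful labelling. Your normalisation $\theta(X)=[0,s-1]$, $\theta(Y)=[s,q]$ (and its odd analogue) just streamlines the same computation the paper carries out with a general set-ordered labelling.
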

\begin{proof} A tree $T$ has its own bipartition $V(T)=X\cup Y$ with $X=\{x_1,x_2,\dots,x_s\}$ and $Y=\{y_1,y_2,\dots,y_t\}$ such that $s+t=|T|$, $u\in X$ and $v\in Y$ for each edge $uv\in E(T)$. Adding leaves to $T$ produces a tree $H$. So, we have that every vertex $x_i$ of the tree $H$ is added the leaves $x_{i,1},x_{i,2},\dots ,x_{i,m_i}$ for $i\in [1,s]$, and every vertex $y_j$ of the tree $H$ is added the leaves $y_{j,1},y_{j,2},\dots ,y_{j,n_j}$ for $j\in [1,t]$. Here, it may be that some $m_i=0$ or $n_j=0$. Let $M_x=\sum^{s}_{k=1}m_k$ and $M_y=\sum^{t}_{k=1}n_k$.

First, we consider the case: $T$ has a set-ordered graceful labelling $f$ such that $0=f(x_1)\leq f(x_i)<f(x_{i+1})$ for $i\in [1,s-1]$, $f(x_s)<f(y_1)$, and $f(y_i)<f(y_{i+1})\leq f(y_t)=s+t-1$ for $i\in [1,t-1]$. We define a labelling $g$ of the tree $H$ as follows.

\emph{Step 1. }We label the edges of $H$ in the following way: $g(x_1x_{1,a})=a$ for $a\in [1,m_1]$, $g(x_ix_{i,a})=a+\sum^{i-1}_{k=1}m_k$ for $a\in [1,m_i]$ and $i\in [2,s]$; $g(y_ty_{t,b})=M_x+b$ for $b\in [1,n_t]$, $g(y_{t-j+1}y_{t-j+1,b})=M_x+b+\sum^{j-1}_{k=1}n_{t-k+1}$ for $b\in [1,n_j]$ and $j\in [2,t]$; $g(y_1y_{1,b})=M_x+M_y-n_1+b$ for $b\in [1,n_1]$; $g(x_iy_j)=f(x_iy_j)+M_x+M_y$ for $x_iy_j\in E(T)$.

\emph{Step 2. }We label the vertices of the tree  $H$ by setting $g(x_i)=f(x_i)$ for $i\in [1,s]$; $g(y_j)=f(y_j)+M_x+M_y$ for $j\in [1,t]$; $g(x_{i,a})=g(x_i)+g(x_ix_{i,a})$ for $a\in [1,m_i]$ and $i\in [1,s]$; $g(y_{j,b})=g(y_j)-g(y_jy_{j,b})$ for $b\in [1,n_j]$ and $j\in [1,t]$. Notice that $g(x_i)<g(x_{i,a})=g(x_i)+g(x_ix_{i,a})\leq f(x_s)+\sum^{i}_{k=1}m_k$ for $a\in [1,m_i]$ and $i\in [1,s]$; and  for $b\in [1,n_j]$ and $j\in [1,t]$, we then  have
$${
\begin{split}
g(y_j)>g(y_{j,b})&=g(y_j)-g(y_jy_{j,b})\\
&=f(y_j)+M_x+M_y-\left (M_x+b+\sum^{j-1}_{k=1}n_{t-k+1}\right )\\
&\geq f(y_j)+\sum^{t-j}_{k=1}n_{k}.
\end{split}}
$$

\emph{Step 3. }Notice that $0=g(x_1)<g(u)\leq g(y_t)=s+t-1+M_x+M_y=|H|-1$ for $u\in V(H)$, and
$${
\begin{split}
g(E(H))&=\{g(x_ix_{i,a}): a\in [1,m_i],i\in [1,s]\}\cup \{g(y_jy_{j,b}):b\in [1,n_j],j\in [1,t]\}\cup g(E(T))\\
&=[1,|E(H)|-1].
\end{split}}
$$ It follows that $g$ is an in-imgl of the tree  $H$.

Second, to prove that the tree  $H$ admits an in-imoddgl, we take a set-ordered odd-graceful labelling $\alpha$ of $T$. By the definition of a set-ordered odd-graceful labelling, we have $0=\alpha(x_1)\leq \alpha(x_i)<\alpha(x_{i+1})$ for $i\in [1,s-1]$, $\alpha(x_s)<\alpha(y_1)$, and  $\alpha(y_i)<\alpha(y_{i+1})\leq \alpha(y_t)=2|T|-3$ for $i\in [1,t-1]$. Since $\alpha(E(T))=[1,2|T|-3]^o$, so the vertices of $X$ have the same parity, so do the vertices of $Y$. Without loss of generality, we assume that each $\alpha(x_i)$ is even, and each $\alpha(y_j)$ is odd. We define another labelling $\beta$ of the tree  $H$ as follows.

We, now, label the edges of the tree  $H$ by defining $\beta(x_1x_{1,a})=2a-1$ for $a\in [1,m_1]$, $\beta(x_ix_{i,a})=2a-1+2\sum^{i-1}_{k=1}m_k$ for $a\in [1,m_i]$ and $i\in [2,s]$; $\beta(y_ty_{t,b})=2M_x-1+2b$ for $b\in [1,n_t]$, $\beta(y_{t-j+1}y_{t-j+1,b})=2M_x-1+2b+2\sum^{j-1}_{k=1}n_{t-k+1}$ for $b\in [1,n_j]$ and $j\in [2,t]$; $\beta(y_1y_{1,b})=2(M_x+M_y-n_1+b)-1$ for $b\in [1,n_1]$; $\beta(x_iy_j)=\alpha(x_iy_j)+2(M_x+M_y)$ for $x_iy_j\in E(T)$.

We, next, label the vertices of the tree  $H$ by setting $\beta(x_i)=\alpha(x_i)$ for $i\in [1,s]$; $\beta(y_j)=\alpha(y_j)+2(M_x+M_y)$ for $j\in [1,t]$;  $\beta(x_{i,a})=\beta(x_i)+\beta(x_ix_{i,a})$ for $a\in [1,m_i]$ and $i\in [1,s]$ (here $\beta(x_{i,a})$ is odd);
$\beta(y_{j,b})=\beta(y_j)-\beta(y_jy_{j,b})$ for $b\in [1,n_j]$ and $j\in [1,t]$ (here $\beta(y_{j,b})$ is even).

Notice that $\beta(x_i)<\beta(x_{i,a})=\beta(x_i)+\beta(x_ix_{i,a})\leq \alpha(x_s)-1+2\sum^{i}_{k=1}m_k$ for $a\in [1,m_i]$ and $i\in [1,s]$. For $b\in [1,n_j]$ and $j\in [1,t]$, we so have
$${
\begin{split}
\beta(y_j)>\beta(y_{j,b})&=\beta(y_j)-\beta(y_jy_{j,b})\\
&=\alpha(y_j)+2(M_x+M_y)-\left (2M_x-1+2b+2\sum^{j-1}_{k=1}n_{t-k+1}\right )\\
&\geq \alpha(y_j)-1+2\sum^{t-j}_{k=1}n_{k}.
\end{split}}
$$ So, no two vertices $u$ and $v$ of $H$ hold $\beta(u)=\beta(v)$. It follows
$${
\begin{split}
\beta(E(H))&=\{\beta(x_ix_{i,a}): a\in [1,m_i],i\in [1,s]\}\cup \{\beta(y_jy_{j,b}):b\in [1,n_j],j\in [1,t]\}\cup \beta(E(T))\\
&=[1,2|E(H)|-1]^o
\end{split}}
$$
that $g$ is a proper odd-graceful labelling of the tree  $H$. The theorem is covered.
\end{proof}

For a lobster $T$, we have a caterpillar $T-L(T)$ by the definition of a lobster. Notice that any caterpillar admits set-ordered graceful/odd-graceful labellings \cite{Zhou-Yao-Chen-Tao2012}. So, the following Corollary \ref{cor:Corollary11} is an immediate consequence of Theorem  \ref{them:basic-theorem}.

\begin{cor} \label{cor:Corollary11}
Every lobster admits in-imgls and proper odd-graceful labellings.
\end{cor}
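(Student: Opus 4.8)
The plan is to deduce the statement directly from Theorem~\ref{them:basic-theorem}, so that essentially no new labelling needs to be constructed by hand. First I would fix a lobster $T$ and form the tree $C=T-L(T)$ obtained by deleting every leaf of $T$. By the very definition of a lobster, $C$ is a caterpillar, and the cited result (Ref.~\cite{Zhou-Yao-Chen-Tao2012}) guarantees that every caterpillar admits both a set-ordered graceful labelling and a set-ordered odd-graceful labelling. Hence $C$ satisfies the hypothesis of Theorem~\ref{them:basic-theorem}.

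Next I would verify that $T$ is exactly a tree obtained from $C$ by \emph{adding leaves}, in the sense of the operation defined immediately before Theorem~\ref{them:basic-theorem}. Writing the bipartition of the tree $C$ as $V(C)=X\cup Y$, each vertex of $C$ is a non-leaf of $T$, and re-attaching to every such vertex precisely those leaves of $T$ that were adjacent to it reconstructs $T$. Concretely, the parameters $m_i$ and $n_j$ appearing in Theorem~\ref{them:basic-theorem} are taken to be the numbers of leaves of $T$ hanging at the corresponding vertices of $X$ and of $Y$, respectively. The point that must be checked is that every leaf of $T$ is indeed adjacent to a vertex surviving in $C$, that is, to a non-leaf of $T$; this holds because in a connected tree on at least three vertices no two leaves can be adjacent, since an edge joining two degree-one vertices would form an isolated $K_2$. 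The degenerate cases $T=K_1$ and $T=K_2$ are disposed of directly, as $K_2$ already admits a graceful labelling.

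With these observations in place, Theorem~\ref{them:basic-theorem} applies to $C$: adding leaves to $C$ yields a tree admitting in-imgls and proper odd-graceful labellings, and that tree is precisely $T$, so Corollary~\ref{cor:Corollary11} follows at once. I expect the reduction to be essentially routine; the only genuinely delicate step is the bookkeeping of the preceding paragraph, namely confirming that deleting all leaves of $T$ and then re-adding them through the prescribed operation returns $T$ itself rather than a proper subtree or a tree with misattached leaves. This is exactly where one must invoke that the leaves of $T$ attach only to vertices of $C$, so that the index sets used in Theorem~\ref{them:basic-theorem} can legitimately be chosen to encode $T$.
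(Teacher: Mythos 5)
Your proposal is correct and follows essentially the same route as the paper: the paper likewise forms the caterpillar $T-L(T)$, invokes the cited result that caterpillars admit set-ordered graceful/odd-graceful labellings, and obtains the corollary as an immediate consequence of Theorem~\ref{them:basic-theorem}. Your extra verification that every leaf of $T$ attaches to a vertex surviving in $T-L(T)$ (so the ``adding leaves'' operation really reconstructs $T$) is a detail the paper leaves implicit, and it is a welcome addition.
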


\begin{thm} \label{thm:1233333}
Let $G$ be a connected bipartite graph having vertices $w_1,w_2,\dots ,w_n$. For each fixed integer $i\in [1,n]$, identifying a certain vertex of every connected bipartite graph $G_{i,j}$ for $j\in [1,m_i]$ (it is allowed some $m_i=0$) with $w_i$ into a vertex produces a connected graph $H^*$. Then $H^*$ admits an in-imgl (resp. in-imoddgl) if $G$ and every $G_{i,j}$ admit set-ordered graceful labellings (resp. set-ordered odd-graceful labellings).
\end{thm}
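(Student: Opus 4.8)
The plan is to mimic the construction in the proof of Theorem \ref{them:basic-theorem}, replacing each added leaf by a whole attached gadget $G_{i,j}$ and replacing the scalar leaf-offsets by blocks of consecutive integers that accommodate each gadget's set-ordered graceful labelling. First I would fix a set-ordered graceful labelling $\theta$ of $G$ with bipartition $(X,Y)$, $X=\{x_1,\ldots,x_s\}$, $Y=\{y_1,\ldots,y_t\}$, normalized so that $\theta(x_1)=0\le\theta(x_i)<\theta(x_{i+1})$, $\theta(x_s)<\theta(y_1)$, and $\theta(y_j)<\theta(y_{j+1})\le\theta(y_t)=q_G$, where $q_G=|E(G)|$; and I would note that $H^*$ is again bipartite, since each $G_{i,j}$ is glued to $G$ at a single vertex and so creates no new cycle. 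Writing $q_{i,j}=|E(G_{i,j})|$ and $Q=\sum_{i,j}q_{i,j}$, the graph $H^*$ has $q=q_G+Q$ edges. The key preparatory step is to use the complement symmetry of graceful labellings (the map $\vartheta\mapsto q_{i,j}-\vartheta$ sends a set-ordered graceful labelling of $G_{i,j}$ to another one with its two colour classes interchanged): for each gadget I orient its labelling so that the identified vertex lands in the colour class merging with the side of $w_i$, and --- the point on which the whole argument turns --- so that the identified vertex occupies an extreme value ($0$ or $q_{i,j}$) of that labelling.

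Next I would lay out the edge labels of $H^*$ as a tiling of $[1,q]$ by consecutive blocks, exactly as the intervals $[1,M_x]$, $[M_x+1,M_x+M_y]$, $[M_x+M_y+1,\ldots]$ appear in Theorem \ref{them:basic-theorem}. The top block $[Q+1,q]$ is reserved for $E(G)$, realized by keeping $g(x_i)=\theta(x_i)$ and setting $g(y_j)=\theta(y_j)+Q$, so that $g(x_iy_j)=\theta(x_iy_j)+Q$; the blocks inside $[1,Q]$ are handed out one per gadget, with gadgets hanging from $X$-vertices placed below those hanging from $Y$-vertices. For a gadget at $x_i$ (whose root is the $0$-vertex of its labelling) I embed it by $g(v)=\vartheta_{i,j}(v)+\theta(x_i)$ on the class containing the root and $g(v)=\vartheta_{i,j}(v)+c_{i,j}+\theta(x_i)$ on the opposite class, where $c_{i,j}$ is the block offset; this makes the gadget hang upward from $g(x_i)=\theta(x_i)$ and sends its edges bijectively onto $[c_{i,j}+1,\,c_{i,j}+q_{i,j}]$. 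Symmetrically, a gadget at $y_j$ (root now the $q_{i,j}$-vertex) is made to hang downward from $g(y_j)=\theta(y_j)+Q$ by subtracting a correspondingly shifted copy of its labelling, just as in the formula $g(y_{j,b})=g(y_j)-g(y_jy_{j,b})$ of Theorem \ref{them:basic-theorem}.

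It then remains to verify the two defining requirements of an in-imgl (condition C1 of Definition \ref{defn:2-improper-graceful-labelling}): that the edge labels exhaust $[1,q]$, which is immediate from the block decomposition and the fact that each embedding preserves edge labels up to its offset; and that every vertex label lies in $[0,q]$ with minimum $0$ and maximum $q=|E(H^*)|$, the maximum being attained at $g(y_t)=q_G+Q$. The odd-graceful statement is obtained by the same scheme with every offset doubled and the blocks taken to be the odd intervals $[\,\cdot\,]^{o}$, paralleling the second half of the proof of Theorem \ref{them:basic-theorem}.

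The main obstacle --- the only place the argument is more than bookkeeping --- is the range check $0\le g(v)\le q$ for the gadget vertices. Since the core already spans the full interval $[0,q]$ (from $g(x_1)=0$ to $g(y_t)=q$), there is no slack at the extreme core vertices $x_1$ and $y_t$, so a gadget must never protrude below $0$ or above $q$. This is precisely why the preparatory step must force the identified vertex to the extreme value of each gadget, so that gadgets attached to $X$ can only rise and gadgets attached to $Y$ can only descend; carrying out this placement so that it is simultaneously compatible with the prescribed bipartition side and with the complement symmetry is the delicate point. Concretely, the inequalities bounding $g$ on the two colour classes of each gadget --- the analogues of the two displayed estimates in Theorem \ref{them:basic-theorem} --- are what I would have to check to close the proof, and where an identified vertex cannot be placed at an extreme one would instead have to exploit the genuine slack available at the non-extreme core vertices.
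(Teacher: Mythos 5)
Your construction is essentially the paper's own proof: the same block partition of $[1,q]$ (blocks for gadgets at $X$-vertices lowest, blocks for gadgets at $Y$-vertices next, the core edges of $G$ on the top block), the same choice of gluing each gadget at an extreme vertex of its set-ordered labelling, the same upward/downward hanging from $X$- and $Y$-vertices, and the range inequalities you defer at the end are exactly the displayed estimates the paper verifies to finish. The only cosmetic difference is that the paper roots every $Y$-gadget at its $0$-labelled vertex and subtracts its shifted labelling from $h(v_i)$, which is the same device as your complement-symmetry normalization.
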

\begin{proof}We show the proof of the first case ``\emph{$H^*$ admits an in-imgl if $G$ and every $G_{i,j}$ admit set-ordered graceful labellings}'', and the proof for the second case ``\emph{$H^*$ admits an in-imoddgl if $G$ and every $G_{i,j}$ admit set-ordered odd-graceful labellings}'' can be omitted since it is the same as that of the first case.

By the theorem's hypothesis, the connected bipartite graph $G$ has its own bipartition $V(G)=X\cup Y$ with $X=\{u_i:i\in [1,s]\}$ and $Y=\{v_j:j\in [1,t]\}$, and has a set-ordered graceful labelliing $f$ such that $0=f(u_1)\leq f(u_i)<f(u_{i+1})$ for $i\in [1,s-1]$, $f(u_s)<f(v_1)$, and $f(v_i)<f(v_{i+1})\leq f(v_{t})=|E(G)|$ for $i\in [1,t-1]$.

For a fixed integer $i\in [1,s]$, we will identify a certain vertex of every connected bipartite $(p^u_{i,j},q^u_{i,j})$-graph $G^u_{i,j}$ with the vertex $u_i\in X$ into a vertex, $j\in [1,m_i]$. Let $V(G^u_{i,j})=X^u_{i,j}\cup Y^u_{i,j}$ be the bipartition of $V(G^u_{i,j})$, where $X^u_{i,j}=\{x^u_{i,j,l}:l\in [1,m^u_{i,j}]\}$ and $Y^u_{i,j}=\{y^u_{i,j,r}:r\in [1,n^u_{i,j}]\}$; $G^u_{i,j}$ has a set-ordered graceful labelliing $f_{i,j}$ such that $0=f_{i,j}(x^u_{i,j,1})\leq f_{i,j}(x^u_{i,j,l})<f_{i,j}(x^u_{i,j,l+1})$ for $l\in [1,m^u_{i,j}-1]$, $f_{i,j}(x^u_{i,j,m^u_{i,j}})<f_{i,j}(y^u_{i,j,1})$, $f_{i,j}(y^u_{i,j,r})<f_{i,j}(y^u_{i,j,r+1})\leq f_{i,j}(y^u_{i,j,n^u_{i,j}})=q^u_{i,j}$ for $r\in [1,n^u_{i,j}-1]$. Let $M_u=\sum^{s}_{k=1} M_k$, where $M_i=\sum^{m_i}_{k=1}q^u_{i,k}$ for $i\in[1,s]$.

For a fixed integer $i\in [1,t]$, a certain vertex of each connected bipartite $(p^v_{i,j},q^v_{i,j})$-graph $G^v_{i,j}$ is identified with $v_i\in Y$ into a vertex, $j\in [1,n_i]$. Let $V(G^v_{i,j})=X^v_{i,j}\cup Y^v_{i,j}$ be the bipartition of $V(G^v_{i,j})$, where $X^v_{i,j}=\{x^v_{i,j,l}:l\in [1,m^v_{i,j}]\}$ and $Y^v_{i,j}=\{y^v_{i,j,r}:r\in [1,n^v_{i,j}]\}$; $G^v_{i,j}$ has a set-ordered graceful labelliing $g_{i,j}$ such that $g_{i,j}(x^v_{i,j,1})=0$,
$g_{i,j}(x^v_{i,j,l})<g_{i,j}(x^v_{i,j,l+1})$ for $l\in [1,m^v_{i,j}-1]$, $g_{i,j}(x^v_{i,j,m^v_{i,j}})<g_{i,j}(y^v_{i,j,1})$, $g_{i,j}(y^v_{i,j,r})<g_{i,j}(y^v_{i,j,r+1})$ for $r\in [1,n^v_{i,j}-1]$, $g_{i,j}(y^v_{i,j,n^v_{i,j}})=q^v_{i,j}$. Let $N_v=\sum^{t}_{k=1} N_k$, where $N_i=\sum^{n_i}_{k=1}q^v_{i,k}$ for $i\in[1,t]$.

\vskip 0.2cm

It is allowed that some $m_i=0$, or $n_i=0$, or $m^u_{i,j}=0$, or $n^u_{i,j}=0$, or $m^v_{i,j}=0$, or $n^v_{i,j}=0$ in the following discussion. We define a new labelling $h$ and construct the desired graph $H^*$ thorough the following steps \emph{A}, \emph{B} and \emph{C}.

\vskip 0.2cm

\emph{A.} Labelling the edges of $G$ and $G^u_{i,j}$ for $j\in [1,m_i]$ and $i\in [1,s]$, and the edges of $G^v_{k,r}$ for $r\in [1,n_k]$ and $k\in [1,t]$.

\emph{A1.} We label the edges $x^u_{1,j,l}y^u_{1,j,r}$ of $G^u_{1,j}$ as:  $h(x^u_{1,1,l}y^u_{1,1,r})=f_{1,j}(x^u_{1,1,l}y^u_{1,1,r})$
for edges $x^u_{1,1,l}y^u_{1,1,r}\in E(G^u_{1,1})$, and we define the labels of edges $x^u_{1,j,l}y^u_{1,j,r}\in E(G^u_{1,j})$ with $j\in [2,m_{1}]$ as
\begin{equation}\label{eqa:edge-label-111}
h(x^u_{1,j,l}y^u_{1,j,r})=f_{1,j}(x^u_{1,j,l}y^u_{1,j,r})+\sum^{j-1}_{k=1}q^u_{1,k}.
\end{equation}
When $i\in[2,s]$, for edges $x^u_{i,1,l}y^u_{i,1,r}\in E(G^u_{i,1})$ we define
\begin{equation}\label{eqa:edge-label-222}
h(x^u_{i,1,l}y^u_{i,1,r})=f_{i,1}(x^u_{i,1,l}y^u_{i,1,r})+\sum^{i-1}_{k=1} M_k,
\end{equation}
and for edges $x^u_{i,j,l}y^u_{i,j,r}\in E(G^u_{i,j})$, $j\in [2,m_{i}]$, we set
\begin{equation}\label{eqa:edge-label-333}
h(x^u_{i,j,l}y^u_{i,j,r})=f_{i,j}(x^u_{i,j,l}y^u_{i,j,r})+\sum^{i-1}_{k=1} M_k+\sum^{j-1}_{k=1}q^u_{i,k}.
\end{equation}

\emph{A2.} Labelling the edges of  $G^v_{i,j}$ for $j\in [1,n_i]$ and $i\in [1,t]$. We label the edges $x^v_{t,n_t,l}y^v_{t,n_t,r}$ of $G^v_{t,n_t}$ by  $h(x^v_{t,n_t,l}y^v_{t,n_t,r})=g_{t,n_t}(x^v_{t,n_t,l}y^v_{t,n_t,r})+M_u$; and  for edges $x^v_{t,j,l}y^v_{t,j,r}\in E(G^v_{t,j})$, $j\in [1,n_{t}-1]$, we set
\begin{equation}\label{eqa:edge-label-444}
h(x^v_{t,j,l}y^v_{t,j,r})=g_{t,j}(x^v_{t,j,l}y^v_{t,j,r})+M_u+\sum^{n_{t}}_{k=j+1}q^v_{t,k}.
\end{equation}

For $i\in[1,t-1]$, we define
\begin{equation}\label{eqa:edge-label-555}
h(x^v_{i,n_{i},l}y^v_{i,n_{i},r})=g_{i,n_{i}}(x^v_{i,n_{i},l}y^v_{i,n_{i},r})+M_u+\sum^{t}_{k=i+1} N_k
\end{equation} for edges $x^v_{i,n_{i},l}y^v_{i,n_{i},r}\in E(G^v_{i,1})$, and set \begin{equation}\label{eqa:edge-label-666}
h(x^v_{i,j,l}y^v_{i,j,r})=g_{i,j}(x^v_{i,j,l}y^v_{i,j,r})+M_u+\sum^{t}_{k=i+1} N_k+\sum^{n_{i}}_{k=j+1}q^u_{i,k}
\end{equation} for edges $x^v_{i,j,l}y^v_{i,j,r}\in E(G^v_{i,j})$, $j\in [1,n_{i}-1]$.

\emph{A3.} We label the edges $u_iv_j$ of  $G$ as: $h(u_iv_j)=f(u_iv_j)+M_u+N_v$ for $u_iv_j\in E(G)$.

\vskip 0.2cm

\emph{B.} Labelling the vertices of $G$ and $G^u_{i,j}$ for $j\in [1,m_i]$ and $i\in [1,s]$, and  the vertices of $G^v_{k,r}$ for $r\in [1,n_k]$ and $k\in [1,t]$.

\emph{B1.} We label the vertices of $G$ as: $h(u_i)=f(u_i)$ for $u_i\in X$, and $h(v_j)=f(v_j)+M_u+N_v$ for $v_j\in Y$.

\emph{B2.} We label the vertices of $G^u_{i,j}$ in the following way:

$(1)$ $h(x^u_{1,j,1})=h(u_1)=0$ for $j\in [1,m_1]$;
$h(x^u_{1,j,l})=f_{1,j}(x^u_{1,j,l})$ for $l\in [2,m^u_{1,1}]$ and $j\in [1,m_1]$; $h(y^u_{1,1,r})=f_{1,j}(y^u_{1,1,r})$ for $r\in [1,n^u_{1,1}]$; and $h(y^u_{1,j,r})=f_{1,j}(y^u_{1,j,r})+\sum^{j-1}_{k=1} q^u_{1,k}$ for $r\in [1,n^u_{1,1}]$ and $j\in [2,m_1]$. Under the vertex labels we verify the edge labels
$${
\begin{split}
h(x^u_{1,1,l}y^u_{1,1,r})&=|h(x^u_{1,1,l})-h(y^u_{1,1,r})|=h(y^u_{1,1,r})-h(x^u_{1,1,l})\\
&=f_{1,1}(y^u_{1,1,r})-f_{1,1}(x^u_{1,1,l})=f_{1,1}(x^u_{1,1,l}y^u_{1,1,r})
\end{split}}
$$
for edges $x^u_{1,1,l}y^u_{1,1,r}\in E(G^u_{1,1})$; and we compute the labels of edges $x^u_{1,j,l}y^u_{1,j,r}\in E(G^u_{1,j})$ for $j\in [2,m_1]$ in the following
$${
\begin{split}
h(x^u_{1,j,l}y^u_{1,j,r})&=|h(x^u_{1,j,l})-h(y^u_{1,j,r})|=h(y^u_{1,j,r})-h(x^u_{1,j,l})\\
&=f_{1,j}(y^u_{1,j,r})+\sum^{j-1}_{k=1} q^u_{1,k}-f_{1,j}(x^u_{1,j,l})=f_{1,j}(x^u_{1,j,l}y^u_{1,j,r})+\sum^{j-1}_{k=1} q^u_{1,k},
\end{split}}
$$ which is equal to (\ref{eqa:edge-label-111}).

$(2)$ For $i\in [2,s]$, we define $h(x^u_{i,j,1})=f_{i,j}(x^u_{i,j,1})+h(u_i)=h(u_i)$ and
$h(x^u_{i,j,l})=f_{i,j}(x^u_{i,j,l})+h(u_i)$ for $l\in [2,m^u_{i,j}]$ and $j\in [1,m_i]$; $h(y^u_{i,1,r})=h(u_i)+f_{i,1}(y^u_{i,1,r})+\sum^{i-1}_{k=1} M_k$ for $r\in [1,n^u_{i,1}]$; and $h(y^u_{i,j,r})=f_{i,j}(y^u_{i,j,r})+\sum^{j-1}_{k=1} q^u_{i,k}+\sum^{i-1}_{k=1} M_k$ for $r\in [2,n^u_{i,j}]$ and $j\in [2,m_i]$. As verification of the edge labels, we can evaluate
\begin{equation}\label{eqa:c3333}
h(x^u_{i,1,l}y^u_{i,1,r})=|h(x^u_{i,1,l})-h(y^u_{i,1,r})|=h(y^u_{i,1,r})-h(x^u_{i,1,l})=f_{i,1}(x^u_{i,1,l}y^u_{i,1,r})+\sum^{i-1}_{k=1} M_k
\end{equation}
for edges $x^u_{i,1,l}y^u_{i,1,r}\in E(G^u_{i,1})$; and for edges $x^u_{i,j,l}y^u_{i,j,r}\in E(G^u_{i,j})$, $j\in [2,m_i]$,
\begin{equation}\label{eqa:c4444}
{
\begin{split}
h(x^u_{i,j,l}y^u_{i,j,r})&=|h(x^u_{i,j,l})-h(y^u_{i,j,r})|=h(y^u_{i,j,r})-h(x^u_{i,j,l})\\
&=f_{i,j}(y^u_{i,j,r})+\sum^{i-1}_{k=1} M_k+\sum^{j-1}_{k=1} q^u_{1,k}-f_{i,j}(x^u_{i,j,l})\\
&=f_{i,j}(x^u_{i,j,l}y^u_{i,j,r})+\sum^{i-1}_{k=1} M_k+\sum^{j-1}_{k=1} q^u_{1,k}.
\end{split}}
\end{equation}
The above (\ref{eqa:c3333}) and (\ref{eqa:c4444}) are equal to (\ref{eqa:edge-label-222}) and (\ref{eqa:edge-label-333}), respectively.

\emph{B3.} We label the vertices of $G^v_{i,j}$ in the following: $(1')$ $h(x^v_{t,j,1})=h(v_t)-g_{t,j}(x^v_{t,j,1})=h(v_t)$ for $j\in [1,n_t]$;
$h(x^v_{t,j,l})=h(v_t)-g_{t,j}(x^v_{t,j,l})$ for $l\in [2,n^v_{t,j}]$ and $j\in [1,n_t]$;
$h(y^v_{t,n_t,r})=h(v_t)-g_{t,n_t}(y^v_{t,n_t,r})-M_u$ for $r\in [1,n^v_{t,n_t}]$; and
$${
\begin{split}
h(y^v_{t,j,r})&=h(v_t)-g_{t,j}(y^v_{t,j,r})-M_u-\sum^{n_t}_{k=j+1} q^v_{t,k}\\
&=f(v_t)-g_{t,j}(y^v_{t,j,r})+N_v-\sum^{n_t}_{k=j+1} q^v_{t,k}\geq f(v_t)+\sum^{t-1}_{k=1}N_k
\end{split}}
$$ for $r\in [1,n^v_{t,j}]$ and $j\in [1,n_t-1]$. We verify the edge labels as follows:
$${
\begin{split}
h(x^v_{t,n_t,l}y^v_{t,n_t,r})&=|h(x^v_{t,n_t,l})-h(y^v_{t,n_t,r})|=h(x^v_{t,n_t,l})-h(y^v_{t,n_t,r})\\
&=g_{t,n_t}(y^v_{t,n_t,r})+M_u-g_{t,n_t}(x^v_{t,n_t,l})
=g_{t,n_t}(x^v_{t,n_t,l}y^v_{t,n_t,r})+M_u
\end{split}}
$$
for edges $x^v_{t,n_t,l}y^v_{t,n_t,r}\in E(G^v_{t,n_t})$; and for edges $x^v_{t,j,l}y^v_{t,j,r}\in E(G^v_{t,j})$ with $j\in [1,n_t-1]$, we have
$${
\begin{split}
h(x^v_{t,j,l}y^v_{t,j,r})&=h(x^v_{t,j,l})-h(y^v_{t,j,r})=g_{t,j}(y^v_{t,j,r})+M_u+\sum^{n_t}_{k=j+1} q^v_{t,k}-g_{t,j}(x^v_{t,j,l})\\
&=g_{t,j}(x^v_{t,j,l}y^v_{t,j,r})+M_u+\sum^{n_t}_{k=j+1} q^v_{t,k}.
\end{split}}
$$ The above edge labels are coincided with (\ref{eqa:edge-label-444}).

$(2')$ For $i\in [1,t-1]$, we set $h(x^v_{i,j,1})=h(v_i)-g_{i,j}(x^v_{i,j,1})=h(v_i)$,
$h(x^v_{i,j,l})=h(v_i)-g_{i,j}(x^v_{i,j,l})$ for $l\in [2,n^v_{i,j}]$ and $j\in [1,n_i]$; $h(y^v_{i,n_i,r})=h(v_i)-g_{i,n_i}(y^v_{i,n_i,r})-M_u-\sum^{t}_{k=i+1}N_{k}$ for $r\in [1,n^v_{i,n_i}]$; and
$${
\begin{split}
h(y^v_{i,j,r})&=h(v_i)-g_{i,j}(y^v_{i,j,r})-M_u-\sum^{t}_{k=i+1}N_{k}-\sum^{n_i}_{k=j+1} q^v_{i,k}\\
&=f(v_i)-g_{i,j}(y^v_{i,j,r})+\sum^{i}_{k=1}N_{k}-\sum^{n_i}_{k=j+1} q^v_{i,k}\geq f(v_i)+\sum^{i-1}_{k=1}N_{k}
\end{split}}
$$ for $r\in [1,n^v_{i,j}]$ and $j\in [1,n_i-1]$. We need to check the edge labels under the vertex labels of $G^v_{i,j}$. Considering edges $x^v_{i,j,l}y^v_{i,j,r}\in E(G^v_{i,j})$ with $j\in [1,n_t-1]$ and  $i\in [1,t-1]$, we can compute
$${
\begin{split}
h(x^v_{i,j,l}y^v_{i,j,r})&=|h(x^v_{i,j,l})-h(y^v_{i,j,r})|=h(x^v_{i,j,l})-h(y^v_{i,j,r})\\
&=g_{i,j}(x^v_{i,j,l}y^v_{i,j,r})+M_u+\sum^{t}_{k=i+1}N_{k}+\sum^{n_i}_{k=j+1} q^v_{i,k},
\end{split}}
$$
which is equal to (\ref{eqa:edge-label-666}); and the labels of edges $x^v_{i,n_i,l}y^v_{i,n_i,r}\in E(G^v_{i,n_i})$ are
$${
\begin{split}
h(x^v_{i,n_i,l}y^v_{i,n_i,r})&=|h(x^v_{i,n_i,l})-h(y^v_{i,n_i,r})|=h(x^v_{i,n_i,l})-h(y^v_{i,n_i,r})\\
&=g_{i,n_i}(x^v_{i,n_i,l}y^v_{i,n_i,r})+M_u+\sum^{t}_{k=i+1}N_{k},
\end{split}}
$$ which is equal to (\ref{eqa:edge-label-555}).

\emph{C.} We, now, construct the desired connected graph $H^*$ by identifying $x^u_{i,j,1}\in X^u_{i,j}\subset V(G^u_{i,j})$ with $u_i\in X\subset V(G)$ into a vertex for  $j\in [1,m_i]$ and $i\in [1,s]$, and then identifying $x^v_{a,b,1}\in X^v_{a,b}\subset V(G^v_{a,b})$ with $v_i\in Y\subset V(G)$ into a vertex for  $b\in [1,n_i]$ and $a\in [1,t]$, respectively.

\vskip 0.2cm

Let $e\,^*=|E(G)|+M_u+N_v$. Notice that $h(w)\in [h(u_1),h(v_t)]=[0,e\,^*]$ for $w\in V(H^*)$, and $$h(E(H^*))=h(E(G))+\sum^s_{i=1}\sum ^{m_i}_{j=1} h(E(G^u_{i,j}))+\sum^t_{k=1}\sum ^{n_k}_{r=1} h(E(G^v_{k,r}))=[1,e\,^*],$$ where $\sum ^{m_i}_{j=1} h(E(G^u_{i,j}))=0$ if $m_i=0$, and $\sum ^{n_k}_{r=1} h(E(G^v_{k,r}))=0$ if $n_k=0$.

Thereby, we conclude that $h$ is an in-improper graceful labelling of $H^*$.
\end{proof}

An example for illustrating Theorem  \ref{thm:1233333} is shown through Figure \ref{fig:example22} to Figure \ref{fig:example44}.

\begin{figure}[h]
\centering
\includegraphics[height=5cm]{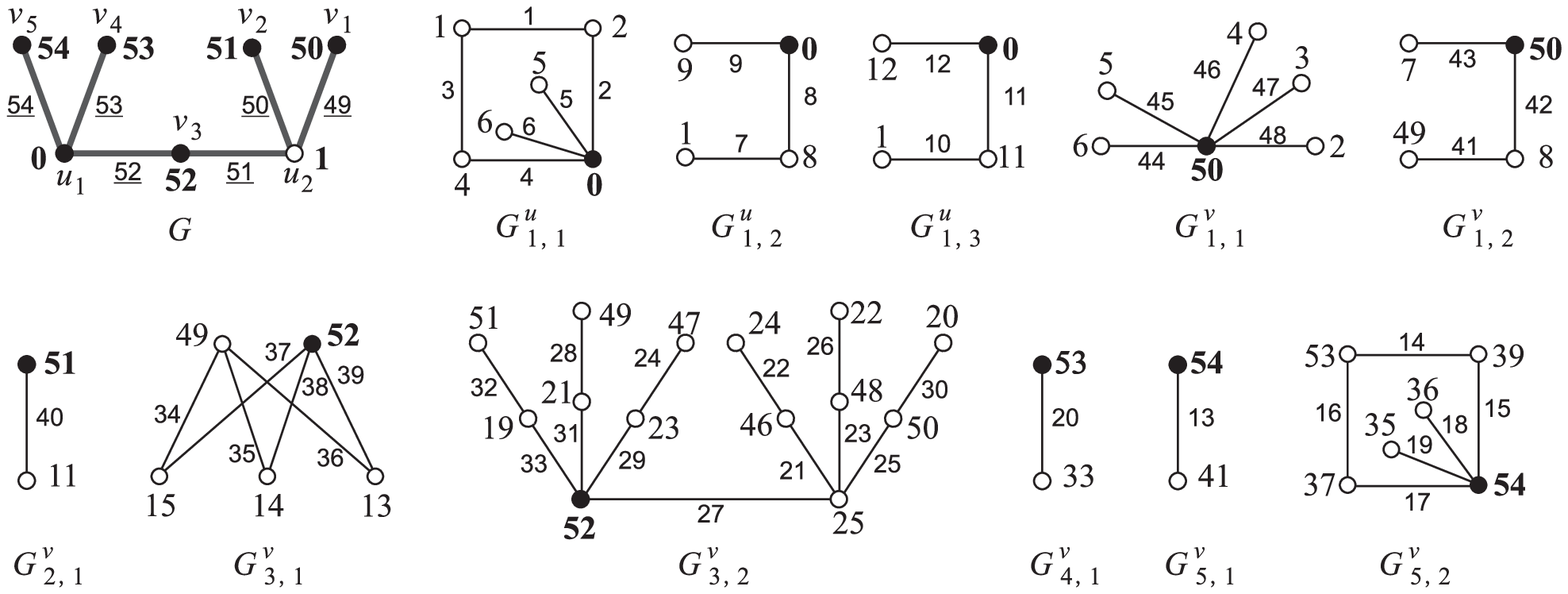}
\caption{\label{fig:example22}  {\small An example for illustrating the proof of Theorem \ref{thm:1233333}.}}
\end{figure}

\begin{figure}[h]
\centering
\includegraphics[height=4.8cm]{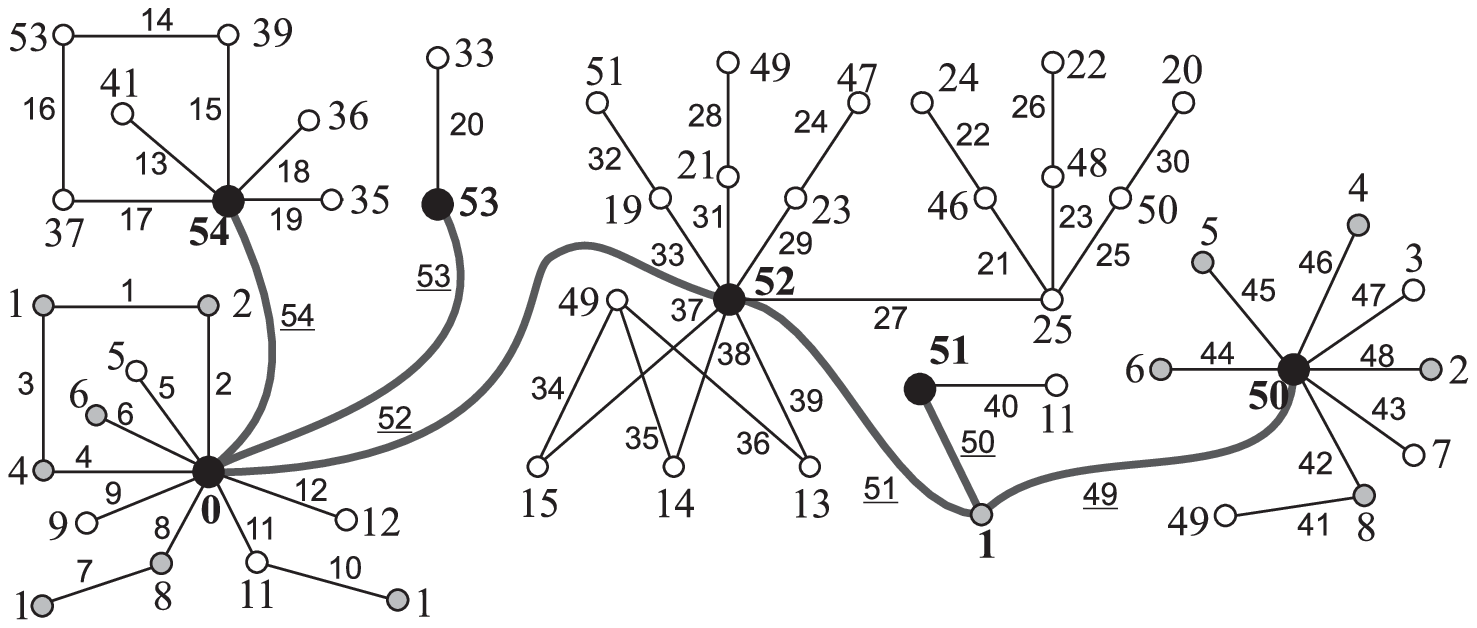}
\caption{\label{fig:example33}  {\small  The graph $H^*$ obtained in Theorem \ref{thm:1233333} has an in-improper graceful labelling.}}
\end{figure}

\begin{figure}[h]
\centering
\includegraphics[height=4.8cm]{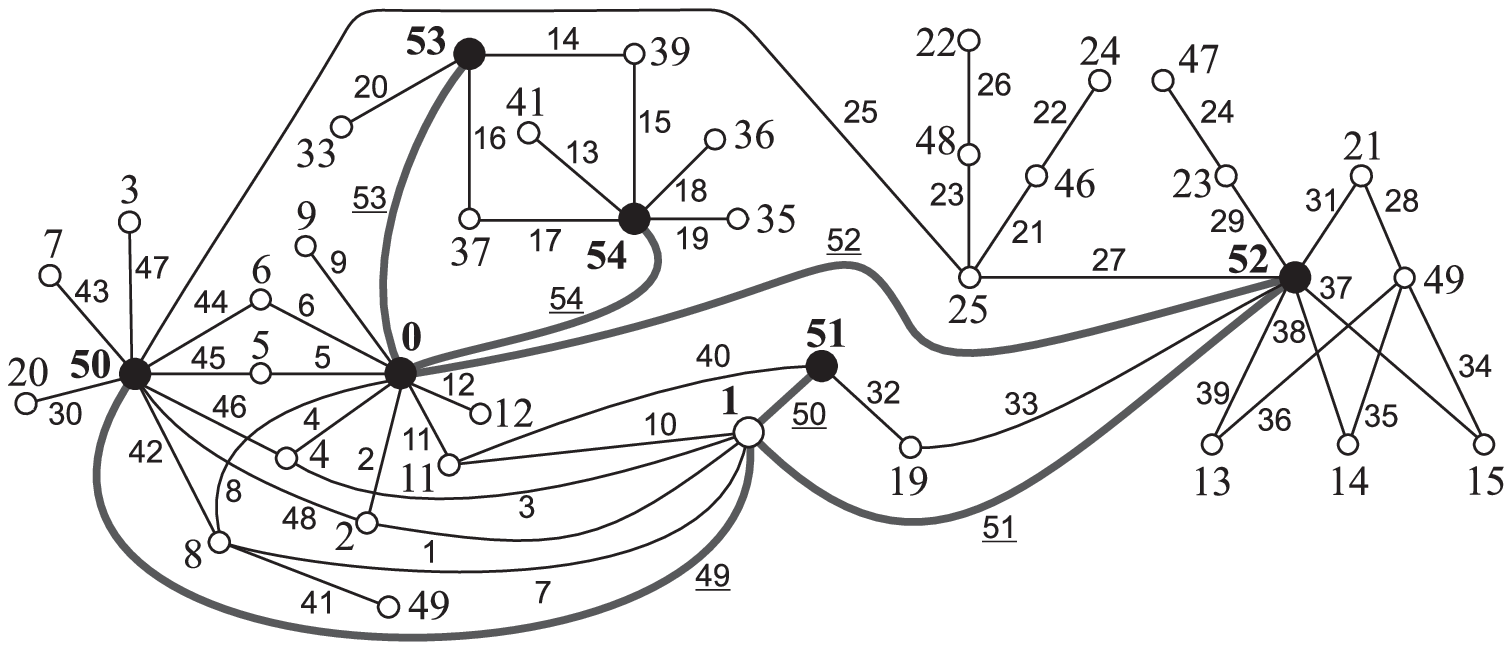}
\caption{\label{fig:example44}  {\small  A graceful graph obtained from $H^*$ shown in Figure \ref{fig:example33} by identifying those vertices having the same labels into one vertex.}}
\end{figure}

We conclude that identifying the vertices having the same labels in the graph $H^*$ shown in Theorem  \ref{thm:1233333} produces a connected graph admitting graceful/odd-graceful labellings (see an example shown in Figure \ref{fig:example44}). We, for further works, present an  extremal problem as follows.

\textbf{Problem.} \emph{Determine in-$(k^{\max}_{in},l^{\max}_{in})$-imgls, in-$(k^{\min}_{in},l^{\min}_{in})$-imgls and out-$(k^{\min}_{out}, l^{\min}_{out})$-imgls of trees}.

\vskip 0.6cm

\noindent \textbf{Acknowledgment.} The author, \emph{Jin Xu}, thanks the National Natural Science
Foundation of China Grant 61309015. The author, \emph{Bing Yao}, was supported by the National Natural Science
Foundation of China under grants No. 61163054 and No. 61363060.

{\footnotesize

}

\end{document}